\newtheorem{theorem}{Theorem}
\newtheorem*{theorem*}{Theorem}
\newtheorem{lemma}{Lemma}
\newtheorem{corollary}{Corollary}
\theoremstyle{definition}
\theoremstyle{remark}
\DeclareMathOperator{\tr}{tr}
\DeclareMathOperator{\diag}{diag}
\begin{document}

\begin{center}
\LARGE{The finite Fourier Transform and projective 2-designs} 
\end{center}
\vspace{0.3 cm}

\begin{center}
\large{Gerhard Zauner} \\ 
\vspace{1mm} 
\footnotesize{mail@gerhardzauner.at} 
\end{center}
\bigskip
\bigskip
\bigskip

\begin{abstract}
There are several approaches to define an eigenvector decomposition of the \emph{finite Fourier Transform} (Fourier matrix), which is in some sense unique, and at best resembles the eigenstates of the quantum harmonic oscillator.

A solution given by Balian and Itzykson~\cite{BI86} in 1986 for prime dimensions $d=3$ (mod $4$) is revisited. 
It is shown, that by applying the Weyl-Heisenberg matrices to this eigenvector basis, a \emph{projective $2$-design} is generated.
\end{abstract}
\bigskip

\section{Motivation: The Quantum Harmonic Oscillator}

The position operator $\mathbf{X}$ and the impulse operator $\mathbf{P}$ 
are each defined on a dense subset of
$\mathcal{L}^2(\mathbb{R})$ via the equations
$(\mathbf{X}f)(x) =
xf(x)$, and $\mathbf{P}f)(x) = -i\frac{d}{dx}f(x)$.

Here we have set $\hbar=1$.
They fulfill the \emph{Canonical commutation relation} $ \left[\mathbf{X},\mathbf{P}\right]= i \mathbf{I}$. 
When we set all physical parameters to $1$, the \emph{Hamiltonian} for the quantum harmonic oscillator is
$\mathbf{H} = \frac{1}{2}(\mathbf{X}^2+\mathbf{P}^2)$.
Its \emph{eigenvalues} resp. energy levels are $n+ \frac{1}{2}$ with $n=0,1,2, \ldots $. The corresponding \emph{eigenstates} are
$
\psi_n(x) = \frac{1}{\sqrt{2^n n! \sqrt{\pi}}} H_n(x)e^{\frac{-x^2}{2}} 
$
with the \emph{Hermite polynomials} $H_n(x)=(-1)^n e^{x^2}\frac{d^n}{dx^n}(e^{-x^2})$. All $\psi_n$ constitute also a particular choice of eigenstates for the \emph{Fourier-Transform} 
\begin{equation*}
(\mathbf{F}f)(y) = \frac{1}{\sqrt{2 \pi}}\int_{-\infty}^{\infty}e^{-i 2 \pi x y}f(x) dx.  
\end{equation*}
$\mathbf{F}$ has the $4$ eigenvalues $\pm 1, \pm i$:
$(\mathbf{F}\psi_n)(x) = (-i)^n \psi_n(x)$.
\medskip

To $\mathbf{X}$ and $\mathbf{P}$ a two-parameter strongly continuous group can be assigned, the so called \emph{Weyl-Heisenberg} group of unitary operators
\begin{equation*}                             
\boldsymbol{\mathsf{W}}(r,s) := e^{i(r\mathbf{P}+s\mathbf{X})}=
e^{-\frac{i r s}{2}}e^{ir\mathbf{P}} e^{is\mathbf{X}}.
\end{equation*}
With 
$r,s \in \mathbb{R} \to \boldsymbol{\mathsf{W}}(r,s)$ they make up the unique unitary, irreducible, projective representation of the additive group $\mathbb{R} \times \mathbb{R}$,~\cite{Th79}. 

\bigskip
Now we turn to the finite dimensional counterparts.

\section{Weyl-Heisenberg Matrices and Fourier Matrix}

Let $d$ be the dimension of a finite-dimensional complex vector space.
\bigskip

We use throughout the paper the notation of bra's $\langle\mathbf{.}|$ (row-vectors) and ket's $|\mathbf{.}\rangle$ (column-vectors).
Let $|\mathbf{e}_r\rangle$, with $r \in \mathbb{Z}_d$ be the standard basis.
Addition of indices is always \emph{modulo} $d$, so e.g. for bra's $|\mathbf{e}_{r+s}\rangle:=|\mathbf{e}_{(r+s) (mod \: d)}\rangle$.  
\bigskip

Let 
$\mathbf{U}|\mathbf{e}_r\rangle = e^{\frac{i2\pi r}{d}}|\mathbf{e}_r\rangle$ and
$\mathbf{V}|\mathbf{e}_r\rangle = |\mathbf{e}_{r+1}\rangle$ or in matrix form:
\begin{equation*}
\mathbf{U} =
\begin{pmatrix}
 1     &  0                           &  0                            & \ldots &  0     \\
 0     &  e^{\frac{i2\pi}{d}}  &  0                             & \ldots &  0     \\
 0     &  0                           &   e^{\frac{i4\pi}{d}}   & \ldots &  0     \\
\vdots & \vdots                  & \vdots                      & \ddots & \vdots \\
 0     &  0                           &  0                             & \ldots &  e^{\frac{i2(d-1)\pi}{d}}
\end{pmatrix},
\quad
\mathbf{V} =
\begin{pmatrix}
 0     &  0     &  0     & \ldots  &  0  &  1     \\
 1     &  0     &  0     & \ldots  &  0  &  0     \\
 0     &  1     &  0     & \ldots  &  0  &  0     \\
 0     &  0     &  1     & \ldots  &  0  &  0     \\
\vdots & \vdots & \vdots  & \ddots  & \vdots & \vdots \\
 0     &  0     &  0     & \ldots  &  1  &  0  
\end{pmatrix}.
\end{equation*}
Let $\tau=e^{\frac{i \pi (d+1)}{d}}=-e^{\frac{i\pi}{d}}$, and for $r, s \in \mathbb{Z}_d$ for odd d, resp.  
$r, s \in \mathbb{Z}_{2d}$  for even d
\begin{equation*}
\boldsymbol{\mathsf{W}}_{(r,s)} := \tau^{r s} {\mathbf{V}}^{r}{\mathbf{U}}^{s}
\end{equation*}

The case of even $d$ needs $2d \times 2d$ matrices due to the phase added. We will look only at the odd case in the following.

These matrices generate the  so-called finite \emph{Weyl-Heisenberg} group, and make up the unique, irreducible, projective representation of the additive group $\mathbb{Z}_d \times \mathbb{Z}_d$.

The factor $\tau^{r s}$ is the analogue of the factor $e^{-\frac{i r s}{2}}$ in the infinite dimensional case and simplifies calculations significantly ({\sc Appleby}~\cite{Ap05}). 
\medskip

In some papers $\dot{\tau}=e^{\frac{i \pi (d^2+1)}{d}}=(-1)^d e^{\frac{i\pi}{d}}$ is used instead of $\tau$. Sometimes inverse matrices are used in the definition. Furthermore for odd $d$ the matrix
$\mathbf{U'} =\diag (e^{\frac{-i \pi (d-1)}{d}},\ldots,  e^{\frac{-i2\pi}{d}}, 1, e^{\frac{i2\pi}{d}}, \ldots, e^{\frac{i \pi (d-1)}{d}})$ is often used, as it reflects the axial symmetry in the infinite case (e.g. {\sc Singh} and {\sc Carroll}~\cite{SC18}).{\footnote{
\emph{\textbf{Remark  on the notation:}} 
Due to the fact that  $\mathbf{U}$ and $\mathbf{V}$ can also be considered as generalizations of the \emph{Pauli Matrices} $\mathbf{Z}, \mathbf{X}$, defined in dimension $d=2$ (as  rotations of the Bloch Sphere around the Z- and X-axis), 
in several papers these letters are used instead.
We use the letters $\mathbf{U}$ and $\mathbf{V}$ as e.g. {\sc Schwinger}~\cite{Sc60} and in many subsequent physics papers, to emphasize the fact that we work on qudits ($d \geq 2$) and not just qubits. 

The letter $\boldsymbol{\mathsf{W}}$ stands for {\sc Weyl}, who brought the matrices
above onstage in physics first~\cite{We31}.
Very often $\mathbf{D}$ for \emph{Displacement} is used instead of $\boldsymbol{\mathsf{W}}$.
}}

\bigskip

In contrast to the infinite case, we don't have uniquely determined infinitesimal generators of the Weyl-Heisenberg group in finite dimensions (as $\mathbf{X}, \mathbf{P}$). 
Therefore we have also no uniquely defined finite counterpart of the quantum harmonic oscillator and its eigenstates. 
\newpage

But we have at least a finite counterpart to the Fourier-Transform:
The \emph{Fourier-Matrix} (also called \emph{Finite} or  \emph{Discrete
Fourier-Transformation} or \emph{Schur-Matrix}) is the $d \times d$ matrix
\begin{equation*}
\mathbf{F}=
\frac{1}{\sqrt{d}} \sum_{r=0}^{d-1} \sum_{s=0}^{d-1}e^{\frac{2 i  \pi r s}{d}}|\mathbf{e}_{r}\rangle\langle\mathbf{e}_{s}|.
\end{equation*}
In matrix form
\begin{equation*}                                   
\mathbf{F} = {\frac{1}{\sqrt{d}}}
\begin{pmatrix}
1      & 1                    			& 1                     			& \ldots & 1                          			\\
1      & e^{\frac{2{i \pi}}{d}}        	& e^{\frac{4{i \pi}}{d}}         	& \ldots & e^{\frac{2(d-1){i \pi}}{d}}     		\\
1      & e^{\frac{4{i \pi}}{d}}       	&e^{\frac{8{i \pi}}{d}}       		& \ldots & e^{\frac{4(d-1){i \pi}}{d}}       	\\
\vdots & \vdots               		& \vdots                			& \ddots & \vdots                     			\\
1      & e^{\frac{2(d-1){i \pi}}{d}} 	& e^{\frac{4(d-1){i \pi}}{d}}  	& \ldots & e^{\frac{(d-1)^2{i \pi}}{d}} 
\end{pmatrix}.
\end{equation*}

As $\mathbf{F}^4 = \mathbf{I}$, the possible eigenvalues of $\mathbf{F}$ are $\pm 1, \pm i$, as in the infinite case. 
The multiplicity of the eigenvalues as function of the dimension $d$ is given by the following table (see~\cite{AT79}).
\begin{equation*}                                        \label{E:Z-EW}
   \begin{tabular}{|c|c|c|c|c|c|} \hline
           		&  $1$    	&  $-1$	&  $i$      	&  $-i$               \\ \hline \hline
      $d=4k$    	&  $k+1$  	&  $k$      	&  $k$      	&  $k-1$              \\ \hline
      $d=4k+1$  	&  $k+1$  	&  $k$      	&  $k$      	&  $k$                \\ \hline
      $d=4k+2$  	&  $k+1$  	&  $k+1$  	&  $k$      	&  $k$                \\ \hline
      $d=4k+3$  	&  $k+1$  	&  $k+1$  	&  $k+1$ 	&  $k$                \\ \hline
   \end{tabular}
\end{equation*}
\bigskip

There is long and ongoing history to define an, in some sense unique, eigenvector decomposition of the Fourier Matrix.

Proposals come from Mathematicians, Physicists, and Electrical and Electronics Engineers (see examples in~\cite{ASSW14, ABD14, BI86, DS82, FL97, Gr82, GH08, KK17, MP72, Mo80, WS07, ZV05}).

The approaches are either starting with the continuous harmonic oscillator eigenstates (e.g. sampling at equidistant points), or focusing on algebraic methods.
\medskip

Here we propose a characterization in terms of Quantum Designs, specifically projective $2$\emph{-designs}.

\section{Projective $2$-designs}

Let $\{|\psi_i\rangle: 1 \leq i \leq N\}$ be $N$ normed vectors and $\{\mathbf{P}_i:=|\psi_i\rangle \langle \psi_i| : 1 \leq i \leq N\}$ be the  
corresponding projection matrices. 
There are many equivalent definitions, when these sets form a projective $2$-\emph{design}, see e.g.~\cite{BZ17a,BZ17b, DGS77, Ho82, RBSC03, RS07,Wa18, Za99}, 
(historically) starting with the condition, that
\begin{equation*}                            
\frac{1}{N}\sum_{i=1}^N f(\mathbf{P}_i)  = \int_{\mathbb{C}\mathbf{P}^{n-1}}f(\mathbf{P}) dp.
\end{equation*}
for any homogeneous polynomial $f$ of degree $2$, and the normed unitary invariant (Haar-)integral on the right side. 

This is equivalent to
\begin{equation}       \label{E:2-Des-Prop}                       
\frac{1}{N}\sum_{i=1}^N \mathbf{P}_i \otimes \mathbf{P}_i = \frac{2}{d(d+1)} \mathbf{\Pi}_{\text{sym}} 
\end{equation}
where $\mathbf{\Pi}_{\text{sym}}$ is the orthogonal projection on the symmetric subspace of $\mathbb{C}^d \otimes \mathbb{C}^d$.
Another equivalent condition is, that the inequality
\begin{equation*}                             
\frac{1}{N^2}\sum_{i=1}^N\sum_{j=1}^N (tr(\mathbf{P}_i \mathbf{P}_j)^2 = \frac{1}{N^2}\sum_{i=1}^N\sum_{j=1}^N | \langle \psi_i | \psi_j \rangle |^4 \geq \frac{2}{d(d+1)}
\end{equation*}
becomes an equality. Well know examples are ~\cite{BZ17a, BZ17b}:
\begin{itemize}
\item{SIC} - (conjectured to exist for all $d \in \mathbb{N}$): $N=d^2$ ,
\item{Complete sets of MUBs} - (exist for all $d=$ prime power): $N=d(d+1)$ 
\item{Clifford group applied to any vector (for $d=$ prime}): $N=d(d^2-1)$  
\end{itemize}
\smallskip
In the last example the Clifford group for prime $d$ is also an example of a so-called \emph{unitary} $2$-design. We only refer to \emph{projective $2$-designs} in this paper and skip the attribute \emph{projective} frequently. 

\section{An example for $d=3$ and some numerical search}

The Fourier matrix for d=3
\begin{equation*}                                   
\mathbf{F} = {\frac{1}{\sqrt{3}}}
\begin{pmatrix}
1      & 1              & 1           \\
1      & \alpha        & \alpha^2   \\
1      & \alpha^2      & \alpha
\end{pmatrix}, \qquad
\alpha = e^{2{\pi}i/3}.
\end{equation*}

has the $3$ eigenvalues $\pm 1$ and $i$, and the (up to phases) \emph{unique} normed eigenvectors are
\begin{equation*}
| \psi_{1}  \rangle = \frac{1}{\sqrt{6 + 2\sqrt{3}}}
\begin{pmatrix}
1  + \sqrt{3}\\
1  \\
1
\end{pmatrix},
|\psi_{-1}  \rangle = \frac{1}{\sqrt{6 - 2\sqrt{3}}}
\begin{pmatrix}
1 - \sqrt{3}\\
1  \\
1
\end{pmatrix},
|\psi_{i}  \rangle = \frac{1}{\sqrt{2}}
\begin{pmatrix}
0 \\
1 \\
-1
\end{pmatrix}
\end{equation*}
We apply the $d^2$ Weyl-Heisenberg matrices to each of the $d=3$ vectors and get a set of $d^3=27$ vectors
$\{\boldsymbol{\mathsf{W}}_{(r,s)} |\psi_x \rangle: x=\pm 1, i \mbox{ and } 0 \leq r,s, \leq 2 \}$.

These vectors form a $2$-design! Actually $|\psi_i \rangle$ is also a fiducial vector for a \emph{SIC-POVM}, 
so due to the additivity of the $2$-design property the first $2$ eigenvectors generate also a $2$-design of $18$ vectors.
\medskip

Numerical search found no complete eigenvector basis of the Fourier matrix for each $d=4,5,6$, that generates a $2$-design like above.
But for $d=7,11$ there were found in each case  (seemingly unique) solutions. 

For $d=5$ there is a (unique) basis of $2$ eigenvectors of eigenvalue $1$, which together with the unique eigenvector for eigenvalue $-1$ generates a set of
$3d^2=75$ vectors, that form a 2-design. 
\medskip

In all these case the eigenvectors can be taken to be real-valued only.

\newpage
To construct the appropriate eigenvector basis for primes $d=4k+3=3,7,11, ...$ we need some preparation.

\section{Clifford Group}
In the following section $d$ is taken to be \emph{odd}.{\footnote{For the definition of the restricted Clifford group given here the choice of the phases $\tau^{r s}$ for $\boldsymbol{\mathsf{W}}_{(r,s)}$ is essential. Therefore again in case of even $d$, one would have to choose $\mathbb{Z}_{2d}$ instead of $\mathbb{Z}_{d}$.}} 
\smallskip

The  \emph{Clifford group}  is defined as \emph{normalizer} 
of the Weyl-Heisenberg group $\tau^q \boldsymbol{\mathsf{W}}_{(r,s)}$, $0 \leq q,r,s \leq d-1$, in the group of unitary matrices $\mathbf{U} \in \mathcal{U}(d)$). It includes the Weyl-Heisenberg group itself as normal subgroup. We are interested here in the subgroup of $\mathbf{U} \in \mathcal{U}(d)$, for which for all $r,s \in \mathbb{Z}_d$
\begin{equation*}
\mathbf{U} \boldsymbol{\mathsf{W}}_{(r,s)} \mathbf{U}^{-1} 
= \boldsymbol{\mathsf{W}}_{(r',s')}
\end{equation*}
for some $r',s' \in \mathbb{Z}_d$ (with no additional phase involved!), and call it  
restricted Clifford group $\mathcal{C}(d)$.  
It is well known~\cite{Ap05, ABD14, BZ17a,BZ17b}, that  the restricted Clifford group is a projective representation of $SL(2,\mathbb{Z}_d)$
with the group homomorphism $h$  defined via 
\begin{equation*}
G=
\begin{bmatrix}
\alpha & \beta \\
\gamma & \delta
\end{bmatrix} 
\xrightarrow[h]{}
\mathbf{U}_G 
\quad \mbox {iff } \quad
\begin{bmatrix}
r'  \\
s' 
\end{bmatrix} =
\begin{bmatrix}
\alpha & \beta \\
\gamma & \delta
\end{bmatrix}
\begin{bmatrix}
r  \\
s 
\end{bmatrix} 
\end{equation*}
where $\mathbf{U}_G$ is a \emph{representative}, which is fixed up to an overall phase factor. 

 The Fourier matrix is an element of  the Clifford group:  
\begin{equation*}
{\mathbf{F}} \boldsymbol{\mathsf{W}}_{(r,s)} \mathbf{F}^{-1}
=\boldsymbol{\mathsf{W}}_{(-s,r)}
\quad \Rightarrow \quad
\begin{bmatrix}
0 & -1 \\
1 & 0
\end{bmatrix}  
\xrightarrow[h]{}
\mathbf{F}  
\end{equation*}

\subsection{The Subgroup of elements commuting with $\mathbf{F}$}

Let $\mathcal{FC}(d) \subset \mathcal{C}(d)$ be the subgroup of elements of the restricted Clifford group, that commute with the Fourier matrix. In terms of $SL(2,\mathbb{Z}_d)$ this means

\begin{equation*}
\begin{bmatrix}
\alpha & \beta \\
\gamma & \delta
\end{bmatrix}
\begin{bmatrix}
0 & -1 \\
1 & 0
\end{bmatrix} 
= 
\begin{bmatrix}
0 & -1 \\
1 & 0
\end{bmatrix} 
\begin{bmatrix}
\alpha & \beta \\
\gamma & \delta
\end{bmatrix} \quad
 \Longrightarrow \quad
\alpha= \delta, \gamma=-\beta. 
\end{equation*}
Therefore $\mathcal{FC}(d)= \{e^{i \xi}\mathbf{U}_G\}$ with $G \in FSL(2,\mathbb{Z}_d)$
\begin{equation*}
FSL(2,\mathbb{Z}_d)=
\left\{G=
\begin{bmatrix}
\alpha & \beta \\
-\beta & \alpha
\end{bmatrix}
:
\mbox{det}(G)= \alpha^2+\beta^2=1\right\} 
\end{equation*}
$\mathcal{FC}(d)$ rsp. $FSL(2,\mathbb{Z}_d)$ are \emph{abelian} (commutative) groups.
\medskip

If $d$ is an \emph{odd prime} then $FSL(2,\mathbb{Z}_d)$ is a cyclic group (see~\cite{BI86,AFN96, AFN97, FL97}) with order
\begin{align*}
|FSL(2,\mathbb{Z}_d)| =
\begin{cases}
d-1  & \mbox{if} \quad d=4k+1  \\ 
d+1 & \mbox{if} \quad d=4k+3 .
\end{cases}
\end{align*}

\newpage
\subsection{Representations of $\mathcal{C}(d)$ and especially $\mathcal{FC}(d)$ in $\mathbb{C}^d$}
We  assume $d$ to be an odd prime. An explicit representation of $\mathcal{C}(d)$ is given by 
{\sc Appleby}~\cite{Ap09} (see also~\cite{Ne02}). We restrict to the subgroup $FSL(2,\mathbb{Z}_d)$, and get, except when $\beta=0$ (which means in $FSL(2,\mathbb{Z}_d)$: $\alpha=\pm 1 \to G=\pm \mathbf{I}$).

\begin{equation*}
G=
\begin{bmatrix}
\alpha & \beta \\
-\beta & \alpha
\end{bmatrix}
\xrightarrow[h]{}
\mathbf{U}_G=
\frac{e^{i \theta}}{\sqrt{d}}\sum_{r=0}^{d-1}\sum_{s=0}^{d-1}\tau^{ \left(\frac{1}{\beta}\right) (\alpha r^2 - 2 r s + \alpha s^2)} |\mathbf{e}_{r}\rangle\langle\mathbf{e}_{s}|
\end{equation*}

The Weyl-Heisenberg matrices are an orthogonal basis of all matrices.
Therefore we can also expand the elements above in this basis. An according representation (see ~\cite{BI86}, and {\sc Athanasiu, Floratos, Nicolis}~\cite{AFN97}) is (except the case $\alpha=1, \beta=0$).
\begin{equation}  \label{E:R-input}
G=
\begin{bmatrix}
\alpha & \beta \\
-\beta & \alpha
\end{bmatrix}
\xrightarrow[h]{}
\mathbf{U}_G=
\frac{e^{i \eta}}{d}\sum_{r=0}^{d-1}\sum_{s=0}^{d-1}\tau^{\frac{\beta}{2(1-\alpha)} (r^2  + s^2)}\boldsymbol{\mathsf{W}}_{(r,s)} 
\end{equation}
The overall phases factors $e^{i \theta}$ rsp. $e^{i \eta}$ can be chosen such, that the presentation becomes de-projectivized (ordinary rsp. faithfull). We don't need this here, and set the factors to $1$.
\medskip

\section{2-designs from the Fourier matrix}

From here on we assume $d$ to be prime with $d= 4k+3$.
\medskip

In this case equation~\eqref{E:R-input}, with $e^{i \eta}=1$, simplifies to the $d$ matrices
\begin{equation} \label{E:Rm}
\mathbf{R}_m := \frac{1}{d} \sum_{r=0}^{d-1}\sum_{s=0}^{d-1}  \tau^{m (r^2  + s^2)} \boldsymbol{\mathsf{W}}_{(r,s)}, \quad  0 \leq m \leq d-1
\end{equation}

We get this form, when we set (mod $d$)
\begin{equation*}
m=\frac{\beta}{2(1-\alpha)} \iff \alpha=\frac{4 m^2-1}{4 m^2+1}, \: \beta=\frac{4 m}{4 m^2+1}
\end{equation*}
These maps are well defined, as $4m^2+1=0$ (mod $d$) has no solution (rsp. $-1$ is no \emph{quadratic residue}) for $d=4k+3$.
\bigskip
 
It was observed by {\sc Balian} and  {\sc Itzykson} (~\cite{BI86}, 1986) that for primes $d= 4k+3$ these matrices provide a \textbf{\emph{unique common orthogonal basis of eigenvectors}} of $\mathbf{F}$.
Up to a phase $\mathbf{F}$ corresponds to $m=\frac{d-1}{2}$. 
See also~\cite{AFN96, AFN97, FL97}. 
\smallskip

They play also a role in the concept 
of \emph{MUB-balanced states}~\cite{ASSW14, ABD14, WS07}.
\medskip

Now we can state the central result of this paper

\newpage

\begin{theorem}                                \label{Theorem}
Let $d=4k+3$ be a prime. Let $\{|\psi_i\rangle: 1 \leq i \leq d\}$ be the unique common orthogonal basis of eigenvectors of the $d$ matrices
\begin{equation*}  
\mathbf{R}_m = \frac{1}{d} \sum_{r=0}^{d-1}\sum_{s=0}^{d-1}  \tau^{m (r^2  + s^2)} \boldsymbol{\mathsf{W}}_{(r,s)}, \quad  0 \leq m \leq d-1
\end{equation*}
The $d^3$ vectors $\{\boldsymbol{\mathsf{W}}_{(k,l)} |\psi_i \rangle: 1 \leq i \leq d \mbox{ , } 0 \leq k,l \leq d-1 \}$ form a $\mathbf{2}$\textbf{-design}.
\end{theorem}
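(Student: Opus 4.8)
The plan is to verify the operator identity~\eqref{E:2-Des-Prop} directly, exploiting the fact that the $d$ eigenvectors $|\psi_i\rangle$ are simultaneous eigenvectors of the abelian family $\{\mathbf{R}_m\}$, equivalently of the whole group $\mathcal{FC}(d) \cong FSL(2,\mathbb{Z}_d)$, which for $d=4k+3$ prime is cyclic of order $d+1$. Writing $N=d^3$ and using that conjugation by $\boldsymbol{\mathsf{W}}_{(k,l)}$ permutes the Weyl--Heisenberg operators (up to phase), the left side of the design condition is
\begin{equation*}
\frac{1}{d^3}\sum_{i=1}^{d}\sum_{k,l} \bigl(\boldsymbol{\mathsf{W}}_{(k,l)}|\psi_i\rangle\langle\psi_i|\boldsymbol{\mathsf{W}}_{(k,l)}^{-1}\bigr)^{\otimes 2}
= \frac{1}{d}\sum_{i=1}^{d}\; \frac{1}{d^2}\sum_{k,l}\bigl(\boldsymbol{\mathsf{W}}_{(k,l)}^{\otimes 2}\bigr)\,\bigl(|\psi_i\rangle\langle\psi_i|^{\otimes 2}\bigr)\,\bigl(\boldsymbol{\mathsf{W}}_{(k,l)}^{-1}\bigr)^{\otimes 2}.
\end{equation*}
The inner average over the $d^2$ Weyl--Heisenberg conjugations is the standard twirl; since the $\boldsymbol{\mathsf{W}}_{(k,l)}^{\otimes 2}$ span the commutant appropriately, $\frac{1}{d^2}\sum_{k,l}\boldsymbol{\mathsf{W}}_{(k,l)}^{\otimes 2} X \boldsymbol{\mathsf{W}}_{(k,l)}^{-1\,\otimes 2}$ projects any $X$ onto the span of $\mathbf{I}\otimes\mathbf{I}$ and the swap $\mathbf{T}$ (equivalently onto the subspace fixed by the diagonal Weyl--Heisenberg action). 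Hence each $i$ contributes a matrix of the form $a_i(\mathbf{I}\otimes\mathbf{I}) + b_i\mathbf{T}$, with coefficients determined by $\tr|\psi_i\rangle\langle\psi_i|^{\otimes 2}=1$ and $\tr\bigl(\mathbf{T}\,|\psi_i\rangle\langle\psi_i|^{\otimes 2}\bigr)=\langle\psi_i|\psi_i\rangle^2=1$, giving the same $a_i=b_i=\frac{1}{d(d+1)}$ for every $i$. Summing over $i$ and rewriting $\mathbf{I}\otimes\mathbf{I}+\mathbf{T}=2\mathbf{\Pi}_{\mathrm{sym}}$ would then yield exactly~\eqref{E:2-Des-Prop}.

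Wait — the catch is that the naive twirl identity above is \emph{false}: $\frac{1}{d^2}\sum_{k,l}\boldsymbol{\mathsf{W}}_{(k,l)}^{\otimes 2}$ does \emph{not} project onto $\mathrm{span}\{\mathbf{I}\otimes\mathbf{I},\mathbf{T}\}$; the Weyl--Heisenberg group alone is only a $1$-design, and its diagonal action on $\mathbb{C}^d\otimes\mathbb{C}^d$ has a commutant of dimension $d^2$ (the ``magic basis'' of Bell-type operators), not $2$. So the single orbit $\{\boldsymbol{\mathsf{W}}_{(k,l)}|\psi\rangle\}$ is generically only a $1$-design, and the design property for the full set of $d^3$ vectors is a genuine coincidence special to this $d$ and this fiducial basis. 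Therefore the actual strategy must be: compute the frame potential
\begin{equation*}
\Phi = \frac{1}{d^6}\sum_{i,j}\sum_{k,l}\sum_{k',l'} \bigl|\langle\psi_i|\boldsymbol{\mathsf{W}}_{(k,l)}^{-1}\boldsymbol{\mathsf{W}}_{(k',l')}|\psi_j\rangle\bigr|^4
= \frac{1}{d^4}\sum_{i,j}\sum_{k,l}\bigl|\langle\psi_i|\boldsymbol{\mathsf{W}}_{(k,l)}|\psi_j\rangle\bigr|^4,
\end{equation*}
using the inequality criterion from Section~4, and show $\Phi = \tfrac{2}{d(d+1)}$. The key reduction is that $\sum_{k,l}|\langle\psi_i|\boldsymbol{\mathsf{W}}_{(k,l)}|\psi_j\rangle|^4$ can be expanded, via $|z|^4 = \langle\psi_i|\boldsymbol{\mathsf{W}}_{(k,l)}|\psi_j\rangle^2\overline{\langle\psi_i|\boldsymbol{\mathsf{W}}_{(k,l)}|\psi_j\rangle}^2$, as a trace involving $\sum_{k,l}\boldsymbol{\mathsf{W}}_{(k,l)}^{\otimes 2}\otimes\overline{\boldsymbol{\mathsf{W}}_{(k,l)}^{\otimes 2}}$ contracted against $(|\psi_i\rangle\langle\psi_j|)^{\otimes 2}$ and its conjugate. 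The $\boldsymbol{\mathsf{W}}$-sum here collapses to a sum of $d^2$ rank-one ``twirl'' terms labelled by displacement vectors, so the whole expression reduces to a manifestly finite sum of inner products $\langle\psi_i|\boldsymbol{\mathsf{W}}_{(a,b)}|\psi_i\rangle\langle\psi_j|\boldsymbol{\mathsf{W}}_{(a,b)}|\psi_j\rangle$ over the basis — i.e. the frame potential is governed by the \emph{overlaps of the fiducial basis vectors with their own Weyl--Heisenberg translates}.

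The main obstacle, then, is evaluating these overlaps $c_{ij}(a,b):=\langle\psi_i|\boldsymbol{\mathsf{W}}_{(a,b)}|\psi_j\rangle$ and the resulting character-type sums. The plan for this is algebraic, following Balian--Itzykson and Athanasiu--Floratos--Nicolis. Since $\{|\psi_i\rangle\}$ diagonalizes $\mathcal{FC}(d)\cong\mathbb{Z}/(d+1)$, which acts on the set of displacement vectors $(a,b)\in\mathbb{Z}_d^2$ by $SL(2,\mathbb{Z}_d)$-rotations $\bigl[\begin{smallmatrix}\alpha&\beta\\-\beta&\alpha\end{smallmatrix}\bigr]$, and since each $\mathbf{R}_m$ is itself (up to phase) a fixed linear combination $\frac1d\sum \tau^{m(r^2+s^2)}\boldsymbol{\mathsf{W}}_{(r,s)}$, I would use the eigenrelation $\mathbf{R}_m|\psi_i\rangle = \lambda_i^{(m)}|\psi_i\rangle$ to get, by taking $\langle\psi_i|\cdot|\psi_j\rangle$, a system of linear relations
\begin{equation*}
\frac1d\sum_{r,s}\tau^{m(r^2+s^2)}\,c_{ij}(r,s) = \lambda_i^{(m)}\,\delta_{ij},
\end{equation*}
valid for all $m$; as $m$ ranges over $\mathbb{Z}_d$ these ``discretized Gauss sums'' pin down the $c_{ij}$ up to the residual symmetry, forcing the off-diagonal structure and making $|c_{ii}(a,b)|$ constant on the $FSL$-orbits of $(a,b)$. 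Feeding the resulting orbit-sums into $\Phi$ and using that $d=4k+3$ is prime (so $-1$ is a non-residue, the orbits of $FSL$ on $\mathbb{Z}_d^2\setminus\{0\}$ have size $d+1$, and the relevant Gauss/Jacobi sums have absolute value $\sqrt d$) should produce exactly the equality $\Phi=\tfrac{2}{d(d+1)}$. The hard part is precisely this last bookkeeping — controlling the quadratic Gauss sums over $\mathbb{Z}_d$ and the orbit combinatorics of $FSL(2,\mathbb{Z}_d)$ so that the cross terms cancel and the minimum of the frame potential is attained — but it is a finite, explicit computation once the eigenvector/overlap structure of the $\mathbf{R}_m$ is in hand.
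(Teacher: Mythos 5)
Your reduction has the right skeleton and in fact parallels the paper's: you correctly reject the naive ``Weyl--Heisenberg twirl projects onto $\mathrm{span}\{\mathbf{I}\otimes\mathbf{I},\mathsf{SWAP}\}$'' shortcut, and you correctly identify that after applying the genuine twirl identity (the sum $\sum_{k,l}\boldsymbol{\mathsf{W}}_{(k,l)}\boldsymbol{\mathsf{W}}_{(r,s)}\boldsymbol{\mathsf{W}}_{(k,l)}^{-1}\otimes\boldsymbol{\mathsf{W}}_{(k,l)}\boldsymbol{\mathsf{W}}_{(r',s')}\boldsymbol{\mathsf{W}}_{(k,l)}^{-1}$ vanishes unless $(r',s')=(-r,-s)$) everything hinges on the overlaps $\tr(\mathbf{P}_i\boldsymbol{\mathsf{W}}_{(r,s)}^*)=\langle\psi_i|\boldsymbol{\mathsf{W}}_{(r,s)}^*|\psi_i\rangle$. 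Whether you phrase it via the operator identity \eqref{E:2-Des-Prop} or via the frame potential, the design condition is equivalent to the single statement that $\sum_{i=0}^{d-1}|\tr(\mathbf{P}_i\boldsymbol{\mathsf{W}}_{(r,s)}^*)|^2$ is independent of $(r,s)$ for $(r,s)\neq(0,0)$ (by Parseval the total over all $(r,s)$ is fixed, so constancy is exactly attainment of the frame-potential minimum).

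The genuine gap is that you never establish this constancy; you defer it to ``bookkeeping'' with Gauss sums, $FSL(2,\mathbb{Z}_d)$-orbit combinatorics, and the eigenvalues $\lambda_i^{(m)}$ of the $\mathbf{R}_m$, none of which is carried out --- and that deferred step is the entire content of the theorem, not a routine verification. Your proposed route would require explicitly solving the linear system $\frac1d\sum_{r,s}\tau^{m(r^2+s^2)}c_{ii}(r,s)=\lambda_i^{(m)}$ for the orbit-averages of $c_{ii}$, which in turn requires knowing the $\lambda_i^{(m)}$ (essentially the characters of the cyclic group $FSL(2,\mathbb{Z}_d)$ of order $d+1$); this is the long Balian--Itzykson road. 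The paper bypasses all of it with one observation you are missing: by Lemma~\ref{L:RmProp}, $\tr(\mathbf{R}_m\mathbf{R}_{m'}^*)$ equals $d$ or $-1$, so the shifted matrices $\mathbf{X}_m=\frac{1}{\sqrt{d+1}}\mathbf{R}_m+\kappa\mathbf{I}$ form an \emph{orthonormal} family spanning the same commutative algebra as the $\mathbf{P}_i$; the change-of-basis matrix $\mathbf{\Lambda}$ between the two orthonormal families is therefore unitary, and unitarity preserves squared column norms, giving $\sum_i|p_i^{(r,s)}|^2=\|(\tau^{m(r^2+s^2)})_{m}\|_2^2=d$ immediately (equation \eqref{E:PProp3}). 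The only Gauss sum needed in the whole argument is the one evaluating $\tr(\mathbf{R}_m\mathbf{R}_{m'}^*)=\frac1d(\sum_r\tau^{(m-m')r^2})^2=-1$ for $m\neq m'$, which is where the hypothesis $d\equiv 3\pmod 4$ enters. Without this (or an executed version of your orbit computation), your argument is a plan rather than a proof.
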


Actually, in the following, we are going to work with the $d$ corresponding projection matrices $\{\mathbf{P}_i:=|\psi_i\rangle \langle \psi_i| : 1 \leq i \leq d\}$ on the eigenvectors. 
Let 
\begin{equation}
\mathbf{P}_i^{(k,l)}=  \boldsymbol{\mathsf{W}}_{(k,l)} \mathbf{P}_i \boldsymbol{\mathsf{W}}_{(k,l)}^{-1} \mbox{ with } 0 \leq i, k,l \leq d-1. 
\end{equation}
We are going to prove
\begin{equation}             \label{E:2DesProp}                      
\frac{1}{d^3}\sum_{i=0}^{d-1} \sum_{k=0}^{d-1} \sum_{l=1}^{d-1} \mathbf{P}_i^{(k,l)} \otimes \mathbf{P}_i^{(k,l)} = \frac{2}{d(d+1)}  \mathbf{\Pi}_{\text{sym}} 
\end{equation}

For this we need some preparation
\begin{itemize}
\item{In the next subsection} we sum up some  features of the Weyl-Heisenberg matrices and of their tensor products, which we use subsequently.
\item{In the subsection afterwards} we expand the projection matrices $\mathbf{P}_i$ on the eigenvectors in terms of the Weyl-Heisenberg matrices. 
We don't calculate them explicitly, but just show some of their properties, based on relations of the Weyl-Heisenberg matrices and the $\mathbf{R}_m$, which allow us 
\item{in the final subsection to prove the theorem}. 
\end{itemize}

\subsection{Some properties of Weyl-Heisenberg matrices}

First we summarize some well-known relations on the Weyl-Heisenberg matrices~\cite[etc. ( $d$ odd)]{Ap05,Ap09,BZ17a,BZ17b}.
\begin{equation} 			  \label{E:WHProp1}
\boldsymbol{\mathsf{W}}_{(k,l)} \boldsymbol{\mathsf{W}}_{(r,s)} = 
\tau^{2(r l-s k)}\boldsymbol{\mathsf{W}}_{(r,s)}\boldsymbol{\mathsf{W}}_{(k,l)} =
\tau^{(r l-s k)}\boldsymbol{\mathsf{W}}_{(k+r,l+s)}
\end{equation}
\begin{equation}                      \label{E:WHProp3}
\tr\left(\boldsymbol{\mathsf{W}}_{(k,l)}\boldsymbol{\mathsf{W}}_{(k',l')}^*\right)=
\begin{cases}
d  & \mbox{if } k= k', l= l'    \: (\mbox{mod } d) \\ 
0   & \mbox{else }.
\end{cases}
\end{equation}
The last equation describes the already mentioned orthogonality of the Weyl-Heisenberg matrices.
Immediate consequence is (see also~\cite{Ya21}) 
\begin{multline}            \label{E:WHProp5}               
\sum_{k=0}^{d-1} \sum_{l=0}^{d-1}  \boldsymbol{\mathsf{W}}_{(k,l)}  \boldsymbol{\mathsf{W}}_{(r,s)} \boldsymbol{\mathsf{W}}_{(k,l)}^{-1} \otimes  \boldsymbol{\mathsf{W}}_{(k,l)}  \boldsymbol{\mathsf{W}}_{(r',s')} \boldsymbol{\mathsf{W}}_{(k,l)}^{-1} \\
=  
\begin{cases}
d^2\left(\boldsymbol{\mathsf{W}}_{(r,s)} \otimes  \boldsymbol{\mathsf{W}}_{(-r,-s)}\right)  	& \mbox{if } r'=-r, s'=-s   \: (\mbox{mod } d)\\
0 																	& \mbox{else}.
\end{cases}
\end{multline}

We furthermore notice the following formula
\begin{equation*}  
\mathsf{SWAP}=
\sum_{q_1=0}^{d-1}\sum_{q_2=0}^{d-1} ( |\mathbf{e}_{q_2}\rangle\langle\mathbf{e}_{q_1}| \otimes |\mathbf{e}_{q_1}\rangle\langle\mathbf{e}_{q_2}| ) =
\frac{1}{d}\sum_{r=0}^{d-1}\sum_{s=0}^{d-1}  \boldsymbol{\mathsf{W}}_{(r,s)} \otimes \boldsymbol{\mathsf{W}}_{(-r,-s)}    
\end{equation*} 
This is a special case of $\mathsf{SWAP}= \frac{1}{d}\sum_{r=0}^{d-1}\sum_{s=0}^{d-1}  \mathbf{g}_{(r,s)} \otimes \mathbf{g}_{(r,s)}^{*}$  for any orthogonal matrix base $\{ \mathbf{g}_{(r,s)}\}$, $0 \leq r, s \leq d-1$, were the standard norm $\|\mathbf{g}_{(r,s)}\|^2_2=d$. See the paper by {\sc Siewert}~\cite{Si22}, were this relation is extensively exploited.  An immediate consequence is 
(see also~\cite{OY19}:
\begin{equation}  				 \label{E:WHProp-Sym}   
\mathbf{\Pi}_{\text{sym}} = \frac{1}{2}\left(\mathbf{I} \otimes \mathbf{I} +\mathsf{SWAP} \right) =
 \frac{1}{2}\left(\mathbf{I}  \otimes \mathbf{I} +\frac{1}{d}\sum_{r=0}^{d-1}\sum_{s=0}^{d-1}  \boldsymbol{\mathsf{W}}_{(r,s)} \otimes \boldsymbol{\mathsf{W}}_{(-r,-s)} \right)
\end{equation} 

\medskip

\subsection{Some properties of $\mathbf{R}_m$ and the projection matrices $\mathbf{P}_i$}

\begin{lemma}                                 \label{L:RmProp}
Let $d=4k+3$ be prime, and $\mathbf{R}_m$ as in definition~\eqref{E:Rm}. For $0 \leq m, m' \leq d-1$
\begin{equation}                      \label{E:RmProp}
\tr\left(\mathbf{R}_m \right)= 1,  \qquad  \qquad
\tr\left(\mathbf{R}_m \mathbf{R}_{m'}^*\right)=
\begin{cases}
-1  & \mbox{if } m \neq m'  \\ 
d   & \mbox{if } m=m' .
\end{cases}
\end{equation}
\end{lemma}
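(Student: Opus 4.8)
The plan is to compute both traces directly from the expansion $\mathbf{R}_m = \frac{1}{d}\sum_{r,s}\tau^{m(r^2+s^2)}\boldsymbol{\mathsf{W}}_{(r,s)}$ together with the orthogonality relation~\eqref{E:WHProp3}. For the first identity, $\tr(\mathbf{R}_m) = \frac{1}{d}\sum_{r,s}\tau^{m(r^2+s^2)}\tr\boldsymbol{\mathsf{W}}_{(r,s)}$; since $\tr\boldsymbol{\mathsf{W}}_{(r,s)} = \tr(\tau^{rs}\mathbf{V}^r\mathbf{U}^s)$ vanishes unless $r=s=0\pmod d$ and equals $d$ in that case, only the $(r,s)=(0,0)$ term survives, giving $\tr(\mathbf{R}_m)=\frac{1}{d}\cdot\tau^0\cdot d = 1$.

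For the second identity, first note $\boldsymbol{\mathsf{W}}_{(r,s)}^* = \boldsymbol{\mathsf{W}}_{(-r,-s)}$ (up to checking the phase convention, which follows from~\eqref{E:WHProp1} or a short direct computation) and that $\tau^{m'(r'^2+s'^2)}$ is a root of unity, so $\overline{\tau^{m'(r'^2+s'^2)}} = \tau^{-m'(r'^2+s'^2)}$. Then
\[
\tr(\mathbf{R}_m\mathbf{R}_{m'}^*) = \frac{1}{d^2}\sum_{r,s}\sum_{r',s'}\tau^{m(r^2+s^2)-m'(r'^2+s'^2)}\tr\!\left(\boldsymbol{\mathsf{W}}_{(r,s)}\boldsymbol{\mathsf{W}}_{(r',s')}^*\right),
\]
and by~\eqref{E:WHProp3} the inner trace forces $r=r'$, $s=s'$ and contributes $d$, collapsing the double sum to $\frac{1}{d}\sum_{r,s}\tau^{(m-m')(r^2+s^2)} = \frac{1}{d}\bigl(\sum_{r=0}^{d-1}\tau^{(m-m')r^2}\bigr)^2$. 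When $m=m'$ this is $\frac{1}{d}\cdot d^2 \cdot \frac{1}{d} = d$; the real work is the case $m\neq m'$.

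The main obstacle is therefore the evaluation of the Gauss sum $g(a) := \sum_{r=0}^{d-1}\tau^{a r^2}$ for $a = m - m' \not\equiv 0\pmod d$. Writing $\tau = -e^{i\pi/d}$, one has $\tau^{ar^2} = (-1)^{ar^2}e^{i\pi a r^2/d}$; for odd $d$ the factor $(-1)^{ar^2} = (-1)^{ar}$ (since $r^2\equiv r\pmod 2$ is false in general, but $(-1)^{r^2}=(-1)^r$ holds), so after reindexing this becomes a standard quadratic Gauss sum of the form $\sum_{r}\omega^{a' r^2}$ with $\omega = e^{2\pi i/d}$ a primitive $d$-th root of unity and $a'$ invertible mod $d$. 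The classical evaluation gives $|g(a)|^2 = d$, and more precisely $g(a) = \left(\frac{a'}{d}\right)g(1)$ with $g(1)^2 = \left(\frac{-1}{d}\right)d = -d$ because $d\equiv 3\pmod 4$. Hence $g(a)^2 = \left(\frac{a'}{d}\right)^2 g(1)^2 = -d$, so $\frac{1}{d}g(a)^2 = -1$, which is exactly the claim. I would state the needed Gauss-sum facts with a reference (e.g. to a standard number theory text), carefully track how the $\tau$-versus-$\omega$ and the $(-1)^{ar}$ bookkeeping interacts with the substitution, and emphasize that it is precisely the hypothesis $d\equiv 3\pmod 4$ (so that $-1$ is a non-residue and $g(1)^2=-d$ rather than $+d$) that produces the value $-1$ and not $+1$.
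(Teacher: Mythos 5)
Your proposal is correct and follows essentially the same route as the paper: the orthogonality relation~\eqref{E:WHProp3} collapses the double sum to $\frac{1}{d}\bigl(\sum_{r}\tau^{(m-m')r^2}\bigr)^2$, and the classical quadratic Gauss sum evaluation for $d\equiv 3\ (\mathrm{mod}\ 4)$ yields $(\pm i\sqrt{d})^2/d=-1$, exactly as in the paper's (much terser) argument. (One trivial slip: your parenthetical claim that ``$r^2\equiv r\ (\mathrm{mod}\ 2)$ is false in general'' is itself false---it always holds---but the identity $(-1)^{r^2}=(-1)^r$ that you actually use is correct, so nothing in the argument is affected.)
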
 
\begin{proof}[Proof]
$\tr\left(\mathbf{R}_{m}\right)= 1$ follows directly, and further using~\eqref{E:WHProp3}, after short calculation
\begin{equation*}
\tr\left(\mathbf{R}_{m} \mathbf{R}_{m'}^*\right) =
\frac{1}{d}\left( \sum_{r=0}^{d-1} \tau^{(m-m')r^2}\right)^2
=
\begin{cases}
(\pm i)^2=-1  & \mbox{if } m \neq m'   \\ 
d   & \mbox{if } m=m' .
\end{cases}
\end{equation*}
Here we used, that for prime numbers $d=4k+3$ the \emph{quadratic Gauss sum} are $\sum_{r=0}^{d-1} \tau^{a r^2}= \left( \frac{a}{d} \right) i \sqrt{d}$ with the \emph{Legendre symbol}
 $\left( \frac{a}{d} \right)=\pm 1$, for $a \neq 0$. 
\end{proof}
\bigskip
\smallskip

Next we define auxiliary matrices $\mathbf{X}_{m}$ as \emph{orthonormalization} of the $\mathbf{R}_{m}$.
\begin{lemma}                                 \label{L:XmProp}
Let $d=4k+3$ be prime, and 
\begin{equation}                      \label{E:XmProp1}
\mathbf{X}_{m} := \frac{1}{\sqrt{d+1}} \mathbf{R}_{m} +  \kappa  \mathbf{I} , \quad  \kappa=\frac{\sqrt{d+1}-1}{d\sqrt{d+1}}, \quad 0 \leq m \leq d-1
\end{equation}
then for $0 \leq m, m' \leq d-1$
\begin{equation}                      \label{E:XmProp2}
\tr\left(\mathbf{X}_{m}\right)= 1,  \qquad  \qquad
tr\left(\mathbf{X}_{m} \mathbf{X}_{m'}^*\right)=
\begin{cases}
0  & \mbox{if } m \neq m'  \\ 
1  & \mbox{if } m=m' .
\end{cases}
\end{equation}
\end{lemma}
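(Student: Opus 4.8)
The statement for $\mathbf{X}_m$ is a purely linear-algebraic rewriting of Lemma~\ref{L:RmProp}, so the plan is to compute $\tr(\mathbf{X}_m)$ and $\tr(\mathbf{X}_m\mathbf{X}_{m'}^*)$ directly from the defining formula~\eqref{E:XmProp1} and substitute the trace identities~\eqref{E:RmProp}. First I would record the two elementary facts needed: $\tr(\mathbf{I})=d$ and $\tr(\mathbf{R}_m\mathbf{I}^*)=\tr(\mathbf{R}_m)=1$ (the latter from Lemma~\ref{L:RmProp}), together with $\tr(\mathbf{I}\mathbf{I}^*)=d$. Then $\tr(\mathbf{X}_m)=\frac{1}{\sqrt{d+1}}\cdot 1+\kappa d$, and plugging in $\kappa=\frac{\sqrt{d+1}-1}{d\sqrt{d+1}}$ gives $\frac{1}{\sqrt{d+1}}+\frac{\sqrt{d+1}-1}{\sqrt{d+1}}=\frac{1+\sqrt{d+1}-1}{\sqrt{d+1}}=1$, which is the first claim.

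For the second claim I would expand
\begin{equation*}
\tr\!\left(\mathbf{X}_m\mathbf{X}_{m'}^*\right)
=\frac{1}{d+1}\tr\!\left(\mathbf{R}_m\mathbf{R}_{m'}^*\right)
+\frac{\kappa}{\sqrt{d+1}}\left(\tr(\mathbf{R}_m)+\overline{\tr(\mathbf{R}_{m'})}\right)
+\kappa^2\tr(\mathbf{I}),
\end{equation*}
using that $\mathbf{X}_{m'}^*=\frac{1}{\sqrt{d+1}}\mathbf{R}_{m'}^*+\kappa\mathbf{I}$ (note $\kappa$ is real). By Lemma~\ref{L:RmProp} the first term is $\frac{d}{d+1}$ if $m=m'$ and $\frac{-1}{d+1}$ if $m\neq m'$; the middle term is $\frac{2\kappa}{\sqrt{d+1}}$ in both cases; and the last term is $\kappa^2 d$. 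So it remains to verify the two scalar identities
\begin{equation*}
\frac{d}{d+1}+\frac{2\kappa}{\sqrt{d+1}}+\kappa^2 d = 1
\qquad\text{and}\qquad
\frac{-1}{d+1}+\frac{2\kappa}{\sqrt{d+1}}+\kappa^2 d = 0,
\end{equation*}
whose difference is the trivial $\frac{d}{d+1}+\frac{1}{d+1}=1$, so the two are equivalent and I only check the second. Writing $t=\sqrt{d+1}$, one has $\kappa=\frac{t-1}{dt}$, hence $\frac{2\kappa}{t}=\frac{2(t-1)}{dt^2}=\frac{2(t-1)}{d(d+1)}$ and $\kappa^2 d=\frac{(t-1)^2}{d t^2}=\frac{(t-1)^2}{d(d+1)}$; adding and using $(t-1)^2+2(t-1)=t^2-1=d$ gives $\frac{d}{d(d+1)}=\frac{1}{d+1}$, which cancels the $\frac{-1}{d+1}$ term. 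This completes the verification.

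There is essentially no obstacle here: the lemma is a normalization bookkeeping step whose only content is that the shift-and-rescale transformation $\mathbf{R}_m\mapsto\mathbf{X}_m$ converts the Gram matrix $\operatorname{diag}(d+1,\dots)$-with-off-diagonal-$(-1)$ structure of the $\mathbf{R}_m$ (after the $\frac{1}{\sqrt{d+1}}$ scaling, the Gram matrix $\frac{1}{d+1}(\,(d+1)\mathbf{I}-\mathbf{J}+\mathbf{J}\,)$ — more precisely $\mathbf{I}+\frac{1}{d+1}(\mathbf{J}-\mathbf{I})\cdot(\text{correction})$) into the identity. The only place to be slightly careful is that $\kappa$ is real so that no complex conjugate of $\kappa$ appears and $\tr(\mathbf{R}_m)=1$ is used for both the $m$ and $m'$ slots; since $\tr(\mathbf{R}_m)=1$ is real this is automatic. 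I would present the computation compactly in the two displays above and conclude.
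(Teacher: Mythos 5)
Your computation is correct and is exactly the "straightforward" verification the paper itself invokes: the paper's proof of this lemma consists of the single sentence that it follows from the definition and equations~\eqref{E:RmProp}, and your expansion of $\tr(\mathbf{X}_m)$ and $\tr(\mathbf{X}_m\mathbf{X}_{m'}^*)$ by linearity, followed by the scalar check with $t=\sqrt{d+1}$, fills in precisely that calculation. No discrepancy in approach; the closing remark about the Gram matrix is superfluous but harmless.
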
 
\begin{proof}[Proof]
The proof is straight forward using the definition and equations~\eqref{E:RmProp}.
\end{proof}

Next we derive an \emph{Ansatz} for the projection matrices on the eigenvectors of  $\mathbf{R}_{m}$ via the auxiliary matrices $\mathbf{X}_{m}$. This helps us
to prove some properties.

\begin{lemma}                                 \label{L:PProp}
Let $d=4k+3$ be prime, and let
\begin{equation}			\label{E:PDef1}
\mathbf{P}_{i} :=   
 \sum_{m=0}^{d-1} \lambda_{i m} \mathbf{X}_{m}  \quad  0 \leq i,m \leq d-1
\end{equation}
be the projection matrices on the common eigenvectors of all $\mathbf{R}_{m}$ in arbitrary order. Then for all $0 \leq i \leq d-1$
\begin{equation}			\label{E:PDef2}
\mathbf{P}_{i} 
= \frac{1}{d\sqrt{d+1}} \sum_{r=0}^{d-1}\sum_{s=0}^{d-1} p_i^{(r,s)} \boldsymbol{\mathsf{W}}_{(r,s)} + \kappa \mathbf{I} 
\end{equation}
with
\begin{equation}			\label{E:PDef3}
\kappa=\frac{\sqrt{d+1}-1}{d\sqrt{d+1}}, \qquad
p_i^{(r,s)}=\sum_{m=0}^{d-1} \lambda_{i m}  \tau^{m (r^2  + s^2)}  \quad  0 \leq r,s,i \leq d-1
\end{equation} 
and we have
\begin{subequations} \label{E:PProp}
\begin{gather}
\qquad p_i^{(0,0)} =1 \qquad \qquad 0 \leq i \leq d-1   								\label{E:PProp1} \\
\: \; \; \quad \quad (p_i^{(r,s)})^* = p_i^{(-r,-s)} \quad \; 0 \leq r,s,i \leq d-1					\label{E:PProp2} \\
\|(p_i^{(r,s)})_{0 \leq i \leq d-1}\|^2_2  = d \qquad \qquad  0 \leq r,s \leq d-1 \qquad	\:		\label{E:PProp3} 
\end{gather}
\end{subequations}
\end{lemma}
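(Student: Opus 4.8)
The plan is to extract all of \eqref{E:PProp1}--\eqref{E:PProp3} from the defining relation \eqref{E:PDef1}, the orthonormality of the $\mathbf{X}_m$ (Lemma~\ref{L:XmProp}), and the elementary facts that each $\mathbf{P}_i$ is a Hermitian rank-one projector with $\tr(\mathbf{P}_i)=1$ and $\tr(\mathbf{P}_i\mathbf{P}_j)=\delta_{ij}$. First one should note that the Ansatz \eqref{E:PDef1} is legitimate: each $\mathbf{X}_m=\tfrac{1}{\sqrt{d+1}}\mathbf{R}_m+\kappa\mathbf{I}$ is diagonal in the common eigenbasis $\{|\psi_i\rangle\}$, so the $d$-dimensional space spanned by the (orthonormal, hence independent) $\mathbf{X}_m$ must coincide with the $d$-dimensional space of all matrices diagonal in that basis, which contains every $\mathbf{P}_i$. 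For the formula \eqref{E:PDef2}--\eqref{E:PDef3} and for \eqref{E:PProp1}: substituting $\mathbf{X}_m=\tfrac{1}{d\sqrt{d+1}}\sum_{r,s}\tau^{m(r^2+s^2)}\boldsymbol{\mathsf{W}}_{(r,s)}+\kappa\mathbf{I}$ into \eqref{E:PDef1} and interchanging the finite sums gives $\mathbf{P}_i=\tfrac{1}{d\sqrt{d+1}}\sum_{r,s}p_i^{(r,s)}\boldsymbol{\mathsf{W}}_{(r,s)}+\kappa\big(\textstyle\sum_m\lambda_{im}\big)\mathbf{I}$ with $p_i^{(r,s)}$ as in \eqref{E:PDef3}; taking traces in \eqref{E:PDef1} and using $\tr(\mathbf{X}_m)=1$ from \eqref{E:XmProp2} and $\tr(\mathbf{P}_i)=1$ yields $\sum_m\lambda_{im}=1$, which is precisely \eqref{E:PProp1} (since $p_i^{(0,0)}=\sum_m\lambda_{im}\tau^{0}$) and simultaneously collapses the identity coefficient to $\kappa$, giving \eqref{E:PDef2}.

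For \eqref{E:PProp2} I would invoke $\mathbf{P}_i=\mathbf{P}_i^{*}$ together with $\boldsymbol{\mathsf{W}}_{(r,s)}^{*}=\boldsymbol{\mathsf{W}}_{(-r,-s)}$ (unitarity of the Weyl--Heisenberg matrices, consistent with \eqref{E:WHProp1}). Conjugating \eqref{E:PDef2}, relabelling $(r,s)\mapsto(-r,-s)$ in the summation and cancelling the real term $\kappa\mathbf{I}$ produces $\sum_{r,s}p_i^{(r,s)}\boldsymbol{\mathsf{W}}_{(r,s)}=\sum_{r,s}\big(p_i^{(-r,-s)}\big)^{*}\boldsymbol{\mathsf{W}}_{(r,s)}$; since the $\boldsymbol{\mathsf{W}}_{(r,s)}$ are orthogonal by \eqref{E:WHProp3} the expansion of any matrix in them is unique, so matching coefficients gives $\big(p_i^{(r,s)}\big)^{*}=p_i^{(-r,-s)}$.

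For \eqref{E:PProp3} the decisive observation is that the $d\times d$ coefficient matrix $\Lambda=(\lambda_{im})$ is \emph{unitary}. Indeed, expanding $\delta_{ij}=\tr(\mathbf{P}_i\mathbf{P}_j^{*})$ by means of \eqref{E:PDef1} and the orthonormality $\tr(\mathbf{X}_m\mathbf{X}_{m'}^{*})=\delta_{mm'}$ gives $\sum_m\lambda_{im}\overline{\lambda_{jm}}=\delta_{ij}$, i.e. $\Lambda\Lambda^{*}=\mathbf{I}$; as $\Lambda$ is square this forces $\Lambda^{*}\Lambda=\mathbf{I}$ as well, hence $\sum_i\lambda_{im}\overline{\lambda_{im'}}=\delta_{mm'}$. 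Writing $n=r^2+s^2$ and using $\overline{\tau^{mn}}=\tau^{-mn}$, one then gets $\sum_i|p_i^{(r,s)}|^2=\sum_{m,m'}\tau^{(m-m')n}\sum_i\lambda_{im}\overline{\lambda_{im'}}=\sum_{m,m'}\tau^{(m-m')n}\,\delta_{mm'}=\sum_{m=0}^{d-1}1=d$, which is \eqref{E:PProp3}.

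All three computations are short. The only steps that require genuine thought are the justification of the Ansatz \eqref{E:PDef1} and the recognition that $\Lambda$ is unitary; I expect the former to be the main obstacle, since it is the only place where the simultaneous diagonalizability of the $\mathbf{R}_m$ --- the Balian--Itzykson fact, which itself depends on $d=4k+3$ being prime --- is really being used. Everything after that is routine algebra with the Weyl--Heisenberg relations of the previous subsection.
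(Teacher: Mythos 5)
Your proof is correct and follows essentially the same route as the paper's: trace of \eqref{E:PDef1} gives $\sum_m\lambda_{im}=1$ (hence \eqref{E:PProp1} and the collapse of the identity coefficient to $\kappa$), Hermiticity of $\mathbf{P}_i$ together with $\boldsymbol{\mathsf{W}}_{(r,s)}^*=\boldsymbol{\mathsf{W}}_{(-r,-s)}$ gives \eqref{E:PProp2}, and unitarity of $\mathbf{\Lambda}$ (as the change of basis between the two orthonormal families $\{\mathbf{X}_m\}$ and $\{\mathbf{P}_i\}$) gives \eqref{E:PProp3} as norm preservation of the vector $(\tau^{m(r^2+s^2)})_m$. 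The only addition is your explicit justification that the Ansatz \eqref{E:PDef1} is legitimate because the orthonormal $\mathbf{X}_m$ span the $d$-dimensional space of matrices diagonal in the common eigenbasis, a point the paper leaves implicit.
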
 
\medskip

\begin{proof}[Proof]
The matrix $\mathbf{\Lambda}= (\lambda_{i m})_{0 \leq i \leq d-1; \; 0 \leq m \leq d-1}$ transfers between the $d$ orthonormal matrices $\mathbf{X}_{m}$ and the also orthonormal $d$ matrices $\mathbf{P}_{i}$. Therefore it must be unitary, which we use below. 
\smallskip

We insert $\mathbf{X}_{m}$ rsp. $\mathbf{R}_{m}$  in definition~\eqref{E:PDef1} and get
\begin{equation*}
\mathbf{P}_{i} 
= \frac{1}{d\sqrt{d+1}} \left( \sum_{r=0}^{d-1} \sum_{s=0}^{d-1}\sum_{m=0}^{d-1} \lambda_{i m}  \tau^{m (r^2  + s^2)} \boldsymbol{\mathsf{W}}_{(r,s)} \right) +  \sum_{m=0}^{d-1} \lambda_{i m} \kappa \mathbf{I} 
\end{equation*}
By applying the trace on~\eqref{E:PDef1}, and using $\tr(\mathbf{P}_{i})=1$ and equation~\eqref{E:XmProp2} we get
$\sum_{m=0}^{d-1} \lambda_{i m} =1$, for all $ 0 \leq i \leq d-1$.
This proves~\eqref{E:PDef2}
with the definition of $p_i^{(r,s)}$ as in~\eqref{E:PDef3}. Now for the properties of these coefficients.
\begin{itemize}
\item{$p_i^{(0,0)} =1$} follows again from $\sum_{m=0}^{d-1} \lambda_{i m} =1$, for all $ 0 \leq i \leq d-1$

\item{$ (p_i^{(r,s)})^* = p_i^{(-r,-s)}$} follows as $\mathbf{P}_{i}^*=\mathbf{P}_{i}$ and $\boldsymbol{\mathsf{W}}_{(r,s)}^*=\boldsymbol{\mathsf{W}}_{(r-,-s)}$.

\item{$\|(p_i^{(r,s)})_{0 \leq i \leq d-1}\|^2_2  = d$} is valid, as the vector $(p_i^{(r,s)})_{0 \leq i \leq d-1}$ is for any $0 \leq r,s \leq d-1$ the unitary transform of the vector  $(\tau^{m (r^2  + s^2)})_{0 \leq m \leq d-1}$ by $\mathbf{\Lambda}$, 
and this vector has obviously squared norm $\|(\tau^{m (r^2  + s^2)})_{0 \leq m \leq d-1}\|^2_2 = d$.

\end{itemize}
\end{proof}
\emph{Remark:} When $\mathbf{P}_{x}$ is an \emph{fiducial} projector for a Weyl-Heisenberg covariant SIC, then $|p_x^{(r,s)}|^2  = 1$. 
See also~\cite{ABFG19, OY19, Ya21} for the SIC-related background.

\newpage
\subsection{Proof of Theorem 1}
\begin{proof}[We want to prove equation~\eqref{E:2DesProp}] 
Let
\begin{equation*} 
x=                        
\frac{1}{d^3}  \sum_{i=0}^{d-1} \sum_{k=0}^{d-1} \sum_{l=0}^{d-1} \mathbf{P}_i^{(k,l)} \otimes \mathbf{P}_i^{(k,l)}  =
\frac{1}{d^3} \sum_{i=0}^{d-1} \sum_{k=0}^{d-1} \sum_{l=0}^{d-1}  \boldsymbol{\mathsf{W}}_{(k,l)} \mathbf{P}_i \boldsymbol{\mathsf{W}}_{(k,l)}^{-1} \otimes  \boldsymbol{\mathsf{W}}_{(k,l)} \mathbf{P}_i \boldsymbol{\mathsf{W}}_{(k,l)}^{-1}
\end{equation*}
When we insert $\mathbf{P}_i$, as described in Lemma~\ref{L:PProp}, the sums reduce significantly, due to the equation~\eqref{E:WHProp5} about sums of tensor-products of Weyl-Heisenberg matrices. We get
\begin{multline*}                         
x                        
=\frac{1}{d^3 (d+1)} \sum_{i=0}^{d-1} 
\sum_{r=0}^{d-1}\sum_{s=0}^{d-1}  
p_i^{(r,s)} p_i^{(-r,-s)} \boldsymbol{\mathsf{W}}_{(r,s)}  \otimes  
\boldsymbol{\mathsf{W}}_{(-r,-s)}
\\
+ \sum_{i=0}^{d-1} p_i^{(0,0)} \frac{2(\sqrt{d+1}-1)}{d^3 (d+1)}  \mathbf{I}  \otimes \mathbf{I} 
+ \sum_{i=0}^{d-1} \frac{(\sqrt{d+1}-1)^2}{d^3 (d+1)} \mathbf{I} \otimes  \mathbf{I}
\end{multline*}
We use~\eqref{E:PProp1} $p_i^{(0,0)} =1$ and ~\eqref{E:PProp2} $p_i^{(r,s)}p_i^{(-r,-s)} = |p_i^{(r,s)}|^2$ and get
\begin{equation*}                         
x                        
=\frac{1}{d^3 (d+1)} 
\sum_{i=0}^{d-1} \sum_{r=0}^{d-1}\sum_{s=0}^{d-1}  
|p_i^{(r,s)}|^2 \boldsymbol{\mathsf{W}}_{(r,s)}  \otimes  
\boldsymbol{\mathsf{W}}_{(-r,-s)}
+\frac{1}{d (d+1)} \mathbf{I} \otimes  \mathbf{I}
\end{equation*}
And finally we use the key equation~\eqref{E:PProp3} $\|(p_i^{(r,s)})_{0 \leq i \leq d-1}\|^2_2=|p_0^{(r,s)}|^2 + \ldots + |p_{d-1}^{(r,s)}|^2=d$ to get
\begin{equation*}                         
x                         
=\frac{1}{d (d+1)} \left(\frac{1}{d}
\sum_{r=0}^{d-1}\sum_{s=0}^{d-1}  \boldsymbol{\mathsf{W}}_{(r,s)}  \otimes  
\boldsymbol{\mathsf{W}}_{(-r,-s)}
+ \mathbf{I} \otimes  \mathbf{I} \right) 
= \frac{2}{d(d+1)}  \mathbf{\Pi}_{\text{sym}} 
\end{equation*}
according the formula for $\mathbf{\Pi}_{\text{sym}}$ in~\eqref{E:WHProp-Sym}.
\end{proof}

\section{State space}

{\sc Słomczyński} and  {\sc Szymusiak}~\cite{SS19} noticed, that $\{\mathbf{P}_i:=|\psi_i\rangle \langle \psi_i| : 1 \leq i \leq N\}$ is a $2$-design, iff
for any matrix $\boldsymbol{\rho}$ with $\tr(\boldsymbol{\rho})=1$
\begin{equation*} 
\boldsymbol{\rho}= (d+1)              
\sum_{i=1}^{N} \frac{d}{N}\tr(\boldsymbol{\rho}\mathbf{P}_i)\mathbf{P}_i-\mathbf{I}
\end{equation*}
This equation can e.g. be proven by using equation~\eqref{E:2-Des-Prop} for $2$-designs, with 
$\mathbf{\Pi}_{\text{sym}} = \frac{1}{2}\left(\mathbf{I} \otimes \mathbf{I} +\mathsf{SWAP} \right)$ 
\begin{equation*}                        
\sum_{i=1}^N \mathbf{P}_i \otimes \mathbf{P}_i = \frac{N}{d(d+1)} \left(\mathbf{I} \otimes \mathbf{I} +\mathsf{SWAP} \right). 
\end{equation*}
Multiplying it with $\mathbf{I} \otimes \boldsymbol{\rho}$, applying the partial trace, and using ({\sc Siewert}~\cite{Si22})
\begin{equation*}                
\tr_{[2]}\left((\mathbf{I} \otimes\boldsymbol{\rho}) \cdot \mathsf{SWAP} \right)=\boldsymbol{\rho}
\end{equation*} 
we get $           
\sum_{i=1}\tr(\boldsymbol{\rho}\mathbf{P}_i)\mathbf{P}_i=\frac{N}{d(d+1)}(\tr(\boldsymbol{\rho})\mathbf{I}+\boldsymbol{\rho})
$.

\newpage

As a consequence we state
\begin{corollary}                                \label{Corollary}
Let $d=4k+3$ be a prime. Let  $\{\mathbf{P}_i:=|\psi_i\rangle \langle \psi_i| : 1 \leq i \leq d\}$ be the projection matrices on the unique common orthogonal basis of eigenvectors of the $d$ matrices
$\mathbf{R}_m$ as in Theorem~\ref{Theorem}
and 
$
\mathbf{P}_i^{(k,l)}=  \boldsymbol{\mathsf{W}}_{(k,l)} \mathbf{P}_i \boldsymbol{\mathsf{W}}_{(k,l)}^{-1} \mbox{ with } 0 \leq i, k,l \leq d-1. 
$
Then for any matrix $\boldsymbol{\rho}$ with $\tr(\boldsymbol{\rho})=1$ (e.g. \emph{density matrices})
\begin{equation} 
\boldsymbol{\rho}= (d+1)              
\sum_{i=0}^{d-1}\sum_{j=0}^{d-1}\sum_{k=0}^{d-1} \rho_i^{(k,l)} \mathbf{P}_i^{(k,l)}-\mathbf{I} \qquad \mbox{ with }  \rho_i^{(k,l)}=\frac{1}{d^2}\tr(\boldsymbol{\rho}\mathbf{P}_i^{(k,l)})
\end{equation}
\end{corollary}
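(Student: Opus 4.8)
The plan is to obtain the Corollary as an immediate specialization of Theorem~\ref{Theorem} via the inversion formula recorded in the \emph{State space} section. First I would note that Theorem~\ref{Theorem} asserts precisely that the family $\{\mathbf{P}_i^{(k,l)} : 0 \le i,k,l \le d-1\}$ is a projective $2$-design with $N = d^3$ members, so the identity $\boldsymbol{\rho} = (d+1)\sum_{i=1}^N \frac{d}{N}\tr(\boldsymbol{\rho}\mathbf{P}_i)\mathbf{P}_i - \mathbf{I}$ applies verbatim, once the single running index $i$ of that identity is relabelled by the triple $(i,k,l)$. Substituting $N = d^3$ turns the normalization factor $\frac{d}{N}$ into $\frac{1}{d^2}$, so the coefficient attached to each $\mathbf{P}_i^{(k,l)}$ is $\frac{1}{d^2}\tr(\boldsymbol{\rho}\mathbf{P}_i^{(k,l)}) = \rho_i^{(k,l)}$, which is exactly the claimed expansion.

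The route I would actually write out, to keep the Corollary self-contained relative to Theorem~\ref{Theorem}, is to start from equation~\eqref{E:2DesProp}, multiply both sides by $\mathbf{I} \otimes \boldsymbol{\rho}$, and apply the partial trace $\tr_{[2]}$ over the second tensor factor. On the left this produces $\frac{1}{d^3}\sum_{i,k,l}\tr(\boldsymbol{\rho}\mathbf{P}_i^{(k,l)})\,\mathbf{P}_i^{(k,l)}$. On the right, writing $\mathbf{\Pi}_{\text{sym}} = \frac{1}{2}(\mathbf{I}\otimes\mathbf{I} + \mathsf{SWAP})$ and using $\tr_{[2]}((\mathbf{I}\otimes\boldsymbol{\rho})\cdot\mathsf{SWAP}) = \boldsymbol{\rho}$ (from~\cite{Si22}) together with $\tr_{[2]}(\mathbf{I}\otimes\boldsymbol{\rho}) = \tr(\boldsymbol{\rho})\,\mathbf{I} = \mathbf{I}$, one gets $\frac{1}{d(d+1)}(\mathbf{I} + \boldsymbol{\rho})$. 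Equating the two sides and solving for $\boldsymbol{\rho}$ gives $\boldsymbol{\rho} = \frac{d+1}{d^2}\sum_{i,k,l}\tr(\boldsymbol{\rho}\mathbf{P}_i^{(k,l)})\,\mathbf{P}_i^{(k,l)} - \mathbf{I}$, and absorbing the $\frac{1}{d^2}$ into the definition of $\rho_i^{(k,l)}$ yields the stated equation.

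Since every ingredient is already established, there is no genuine obstacle; the only points needing care are notational. One should check that the normalization is $\frac{d}{N} = \frac{1}{d^2}$ (not $\frac{1}{d}$ or $\frac{1}{d^3}$), which is what forces precisely the prefactor $(d+1)$ in front of the sum, and one should read the summation in the statement as running over the three indices $i,k,l$ (the index written $j$ being a typographical slip for $l$, consistent with the superscript $(k,l)$ on $\rho_i^{(k,l)}$ and $\mathbf{P}_i^{(k,l)}$). With $N = d^3$ substituted, both observations are immediate.
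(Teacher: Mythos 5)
Your proposal is correct and follows essentially the same route as the paper: the Corollary is obtained by specializing the general state-space identity $\boldsymbol{\rho}=(d+1)\sum_{i}\frac{d}{N}\tr(\boldsymbol{\rho}\mathbf{P}_i)\mathbf{P}_i-\mathbf{I}$ to $N=d^3$, and that identity is itself derived exactly as you describe, by multiplying equation~\eqref{E:2DesProp} by $\mathbf{I}\otimes\boldsymbol{\rho}$, taking the partial trace, and using $\tr_{[2]}\bigl((\mathbf{I}\otimes\boldsymbol{\rho})\cdot\mathsf{SWAP}\bigr)=\boldsymbol{\rho}$. Your observations that $\frac{d}{N}=\frac{1}{d^2}$ and that the summation index written $j$ in the statement is a typographical slip for $l$ are both accurate.
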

\bigskip

This result can be seen as counterpart of the equation for the $d^2$ projection matrices of a  \emph{SIC} $2$-designs 
$\boldsymbol{\rho}= (d+1)              
\sum_{i=1}^{d^2} \rho_i \mathbf{P}_i-\mathbf{I}$, with 
$\rho_i=\frac{1}{d}\tr(\boldsymbol{\rho}\mathbf{P}_i)$ 
which (as \emph{"measurement in the sky"}) attracted a great deal of attention in the context of the \emph{QBism} approach to quantum theory by {\sc Fuchs} et al.~\cite{FS09, FS19}.

\bigskip

\section{Concluding remarks}

One can try to generalize the construction given here to further dimensions $d$. 
The rank $1$ projection matrices of SICs, as well as complete sets of MUBs are projective $2$-designs, which, like the construction here, are covariant under the Weyl-Heisenberg groups. 
Are there more similar $2$-designs?

Finally it would be interesting, if and how the $2$-design property could possibly be related to the quantum harmonic oscillator.
\medskip

An outline of this paper was presented at the 6th Workshop on Algebraic Designs, Hadamard Matrices $\&$ Quanta, at Jagiellonian University Krakow, June 27 – July 2, 2022.

The author is grateful to Marcus Appleby, Markus Grassl and Danylo Yakymenko for valuable comments.

\newpage

\end{document}